\documentclass[10pt,conference,romanappendices]{IEEEtran}
\usepackage[cmex10]{amsmath}
\usepackage{amssymb}
\usepackage{amsfonts}

\newtheorem{definition}{Definition} 
\newtheorem{proposition}{Proposition}
\newtheorem{lemma}{Lemma}

\newtheorem{remark}{Remark}

\newenvironment{proof}{\hspace{8pt}\ti{Proof:}}{~~~~QED}

\newcommand{\bm}[1]{{\mbox{\boldmath $#1$}}}
\newcommand{\pnt}[1]{{\mbox{\boldmath $#1$}}}
\newcommand{\cof}[2]{\mbox{$#1_{\boldsymbol{#2}}$}}

\newcommand{\nGz}[2]{$G_{non-\{z\}}$}

\newcommand{\mi}[1]{\mathit{#1}}
\newcommand{\ti}[1]{\textit{#1}}
\newcommand{\tb}[1]{\textbf{#1}}

\newcommand{\ttt}{\>\>\>}

\newcommand{\Tt}{\>\>}

\newcommand{\Sup}[2]{\mbox{$#1^\mi{#2}$}}

\newcommand{\prob}[2]{\mbox{$\exists{#1} [#2]$}}
\newcommand{\pprob}[2]{\exists{#1} [#2]}

\newcommand{\Comment}[1]{}

\newcommand{\tr}[3]{\mbox{(\pnt{#1_{#2}},\dots,\pnt{#1_{#3}})}}
\newcommand{\Tr}[3]{(\pnt{#1_{#2}},\!\dots,\!\pnt{#1_{#3}})}

\newcommand{\impl}[2]{\mbox{$#1 \rightarrow #2$}}
\newcommand{\ks}{\mbox{$\xi$}}
\newcommand{\KS}{\mbox{$\xi$}~}

\newcommand{\pqe}[4]{\mbox{$\prob{#1}{#2 \wedge #4} \equiv #3 \wedge \prob{#1}{#4}$}}

\newcommand{\Abs}[2]{\mbox{$\mathbb{#1}_{#2}$}}

\mathchardef\mhyphen="2D

\newcommand{\di}[1]{\mbox{$\mi{Diam}(#1)$}}
\newcommand{\pp}{\mbox{$\mi{ProveProp}$}}
\newcommand{\PP}{\mbox{$\mi{ProveProp}$}~}

\newcommand{\fc}{\mbox{$\mi{ProveProp*}$}}
\newcommand{\FC}{\mbox{$\mi{ProveProp*}$}~}
\newcommand{\s}[1]{\mbox{$\{#1\}$}}

\usepackage{wrapfig}
\usepackage{graphicx}
\usepackage{enumerate}
\usepackage[hidelinks]{hyperref}
\begin{document}

\title{Property Checking Without Inductive Invariants}

\author{\IEEEauthorblockN{Eugene Goldberg} 
\IEEEauthorblockA{
eu.goldberg@gmail.com}}

\maketitle

\begin{abstract}
We introduce a procedure for proving safety properties.  This
procedure is based on a technique called Partial Quantifier
Elimination (PQE).  In contrast to complete quantifier elimination, in
PQE, only a part of the formula is taken out of the scope of
quantifiers. So, PQE can be dramatically more efficient than complete
quantifier elimination.  The appeal of our procedure is
twofold. First, it can prove a property without generating an
inductive invariant. Second, it employs depth-first search and so can
be used to find deep bugs.
\end{abstract}

\section{Introduction}
%
%
\subsection{Motivation}
\label{ssec:motiv}
Property checking is an important part of hardware verification. (In
this paper, by property checking we mean verification of \ti{safety}
properties.) Arguably, the most popular way to prove a property is to
build an inductive invariant. While the recent SAT-based methods for
generating inductive invariants have enjoyed great
success~\cite{ken03,ic3}, this approach has some problems.  First, for
some properties, \ti{compact} inductive invariants may not
exist. Second, even if such inductive invariants exist, they may be
very \ti{hard to find}.  Third, if a property does not hold due to a
\ti{deep} bug, running an algorithm for building an inductive
invariant may not be the best way to find this bug.  In this paper, we
describe a method for proving properties without generation of
inductive invariants. This method employs depth-first search and so
can be useful for finding deep bugs.

%
%
\subsection{Partial quantifier elimination}
\label{ssec:pc_by_pqe}
This paper is a part of our effort to develop a technique called
partial quantifier elimination (PQE)\cite{hvc-14}. In contrast to
regular (i.e. ``complete'') quantifier elimination, only a part of the
formula is taken out of the scope of quantifiers in PQE. The appeal of
PQE is twofold.  First, it provides a language for \ti{incremental}
computing. Second, PQE can be dramatically \ti{more efficient} than
complete quantifier elimination.

In our work on PQE we combine bottom-up and top-down approaches.  The
bottom-up part is to develop algorithms for efficiently solving
PQE~\cite{hvc-14,fmcad12,fmcad13,cert_tech_rep}. The top-down part is
to create PQE based methods for solving verification problems.  For
instance, we have described such methods for SAT
solving~\cite{hvc-14,south_korea,cert_tech_rep}, equivalence
checking~\cite{fmcad16}, model checking~\cite{tech_rep_pc_lor},
testing~\cite{cert_tech_rep,tech_rep_crr}, checking the completeness
of specification~\cite{con_props}. This paper is an addition to the
top-down part of our research. Namely, it describes a PQE based
algorithm for property checking that does not generate an inductive
invariant.

%
%
\subsection{Problem we consider}
Let \KS be a transition system specified by transition relation
$T(S,S')$ and formula $I(S)$ describing initial states. Here $S$ and
$S'$ are sets of variables specifying the present and next states
respectively. Let \pnt{s} be a state i.e. an assignment to
$S$. Henceforth, by an assignment \pnt{q} to a set of variables $Q$ we
mean a \ti{complete} assignment unless otherwise stated i.e. all
variables of $Q$ are assigned in \pnt{q}.

Let $P(S)$ be a property of \ks. We will call a state \pnt{s} a
\smallskip
\bm{P}\tb{-state} if $P(\pnt{s})=1$.  We will refer to a
$\overline{P}$-state (i.e. a state where $P$ fails) as a \tb{bad
  state}.  The problem we consider is to check if a bad state is
reachable in \ks.  We will refer to this problem as \ti{the safety
  problem}.  Usually, the safety problem is solved by finding an
\tb{inductive invariant} i.e. a formula $K(S)$ such that \impl{K}{P}
and \impl{K(S) \wedge T(S,S')}{K(S')}.

%
%
\subsection{Property checking without inductive invariants}
\label{ssec:pc_no_inv}
In this paper, we consider an approach where the safety problem is
solved without generation of an inductive invariant. We will refer to
a system with initial states $I$ and transition relation $T$ as an
\bm{(I,T)}\tb{-system}. Let \di{I,T} denote the \tb{reachability
  diameter} of an $(I,T)$-system. That is $n=\di{I,T}$ means that
every state of this system is reachable in at most $n$ transitions.
One can partition the problem of checking if property $P$ holds into
two subproblems.
\begin{enumerate}
\item Find the value of \di{I,T}.
\vspace{3pt}
\item Check if a $\overline{P}$-state is reachable in $n$ transitions
  where $n \leq \di{I,T}$.
\end{enumerate}
We describe a procedure called \PP that solves the two subproblems
above.  We will refer to the first subproblem as the \tb{RD problem}
where RD stands for Reachability Diameter.

\PP is formulated in terms of PQE, which has the following advantages.
First, the RD problem above is solved without generating the set of
all reachable states. Second, the property $P$ is proved without
generation of an inductive invariant. Third, due to using PQE, \PP
performs depth-first search, which is beneficial for finding deep
bugs. Importantly, as we mentioned above, PQE can be much more
efficient than complete quantifier elimination.

To prove a property $P$ true, \PP has to consider traces of length up
to \di{I,T}. This may slow down property checking for systems with a
large diameter. We describe a variation of \PP called \FC that can
prove a property without generation of long traces. \FC achieves
faster convergence by expanding the set of initial states with
\mbox{$P$-states} i.e. by replacing $I$ with \Sup{I}{exp} such that
\impl{I}{\Sup{I}{exp}} and \impl{\Sup{I}{exp}}{P}.

%
%
\subsection{Contributions and structure of the paper}
The contribution of this paper is twofold. First, we present a
procedure for finding the reachability diameter without computing the
set of all reachable states. Second, we describe a procedure for
proving a property without generating an inductive invariant.

This paper is structured as follows.  We recall PQE in
Section~\ref{sec:pqe_def}. Basic definitions and notation conventions
are given in Section~\ref{sec:defs}.  Section~\ref{sec:gist} describes
how one can look for a counterexample and solve the RD problem by PQE.
In Section~\ref{sec:dfs}, we show that PQE enables depth-first search
in property checking.  The \PP and \FC procedures are presented in
Sections~\ref{sec:prove_prop} and~\ref{sec:prove_prop*} respectively.
Section~\ref{sec:background} provides some background. We make
conclusions in Section~\ref{sec:conclusions}.

\section{partial quantifier elimination}
\label{sec:pqe_def}
In this paper, by a quantified formula we mean one with
\ti{existential} quantifiers. We assume that all formulas are
propositional formulas in Conjunctive Normal Form (\tb{CNF}). The
latter is a conjunction of \tb{clauses}, a clause being a disjunction
of literals.

Given a quantified formula \prob{W}{A(V,W)}, the problem of
\ti{quantifier elimination} is to find a quantifier-free formula
$A^*(V)$ such that $A^* \equiv \prob{W}{A}$.  Given a quantified
formula \prob{W}{A(V,W) \wedge B(V,W)}, the problem of \tb{Partial
  Quantifier Elimination} (\tb{PQE}) is to find a quantifier-free
formula $A^*(V)$ such that \pqe{W}{A}{A^*}{B}.  Note that formula $B$
remains quantified (hence the name \ti{partial} quantifier
elimination). We will say that formula $A^*$ is obtained by
\tb{taking} \pnt{A} \tb{out of the scope of quantifiers} in \prob{W}{A
  \wedge B}. We will call $A^*$ a \tb{solution} to the PQE problem
above.

Let $G(V)$ be a formula implied by $B$. Then \pqe{W}{A}{A^* \wedge
  G}{B} implies that $\prob{W}{A \wedge B} \equiv A^* \wedge
\prob{W}{B}$ . In other words, clauses implied by the formula that
remains quantified are \ti{noise} and can be removed from a solution
to the PQE problem. So, when building $A^*$ by resolution it is
sufficient to use only the resolvents that are descendants of clauses
of $A$. For that reason, in the case formula $A$ is much smaller than
$B$, PQE can be dramatically faster than complete quantifier
elimination.  In this paper, we do not discuss PQE solving. This
information can be found in \cite{hvc-14,cert_tech_rep}.

\section{Definitions and notation}
\label{sec:defs}
\subsection{Basic definitions}
\label{subsec:defs}
%
%
\begin{definition}
Let \KS be an $(I,T)$-system. An assignment \pnt{s} to state variables
$S$ is called \tb{a state}.  A sequence of states \tr{s}{0}{n} is
called a \tb{trace}. This trace is called \tb{valid} if
\begin{itemize}
\item $I(\pnt{s_0})=1$,
\item $T(\pnt{s_i},\pnt{s_{i+1}}) = 1$ where $i=0,\dots,n-1$.
\end{itemize}
\end{definition}

Henceforth, we will drop the word \ti{valid} if it is obvious from the
context whether a trace is valid.
%
%
\begin{definition}
Let \tr{s}{0}{n} be a valid trace of an $(I,T)$-system.  State
\pnt{s_n} is said to be \tb{reachable} in this system in $n$
transitions.
\end{definition}
%
%
\begin{definition}
Let \KS be an $(I,T)$-system and $P$ be a property to be checked.  Let
\tr{s}{0}{n} be a valid trace such that
\begin{itemize}
\item every state \pnt{s_i}, $i=0,\dots,n-1$ is a $P$-state i.e. $P(\pnt{s_i})=1$.
\item state \pnt{s_n} is a $\overline{P}$-state i.e. $P(\pnt{s_n})=0$. 
\end{itemize}
Then this trace is called a \tb{counterexample} for property $P$.
\end{definition}
\begin{remark}
We will use the notions of a CNF formula $C_1 \wedge .. \wedge C_p$
and the set of clauses \s{C_1,\dots,C_p} \ti{interchangeably}. In
particular, the fact that $F$ has no clauses (i.e. $F = \emptyset $)
also means $F \equiv 1$ and vice versa.
\end{remark}

%
%
\subsection{Some notation conventions}
\label{subsec:notation}
\begin{itemize}
\item $S_j$ denotes the state variables of $j$-th time frame.
\item \Abs{S}{j} denotes $S_0 \cup \dots \cup S_j$. 
\item $T_{j,j+1}$ denotes $T(S_j,S_{j+1})$.  
\item \Abs{T}{j} denotes $T_{0,1} \wedge \dots
T_{j-1,j}$.  
\item  $I_0$ and $I_1$ denote $I(S_0)$ and $I(S_1)$ respectively.
\end{itemize}

\section{Property Checking By PQE}
\label{sec:gist}
In this section, we explain how the \PP procedure described in this
paper proves a property without generating an inductive invariant.
This is achieved by reducing the RD problem and the problem of finding
a bad state to PQE.  To simplify exposition, we consider systems with
stuttering. This topic is discussed in
Subsection~\ref{ssec:stuttering}. There we also explain how one can
introduce stuttering by a minor modification of the system. The main
idea of \PP and two propositions on which it is based are given in
Subsection~\ref{ssec:main_idea}.

%
%
\subsection{Stuttering}
\label{ssec:stuttering}
Let \KS denote an $(I,T)$-system. The \PP procedure we describe in
this paper is based on the assumption that \KS has the \tb{stuttering}
feature. This means that $T(\pnt{s},\pnt{s})$=1 for every state
\pnt{s} and so \KS can stay in any given state arbitrarily long.  If
\KS does not have this feature, one can introduce stuttering by adding
a combinational input variable $v$.  The modified system \KS works as
before if $v=1$ and remains in its current state if $v=0$. (For the
sake of simplicity, we assume that \KS has only sequential
variables. However, one can easily extend explanation to the case
where \KS has combinational variables.)

On the one hand, introduction of stuttering does not affect the
reachability of a bad state and does not affect the value of
\di{I,T}. On the other hand, stuttering guarantees that \KS has two
nice properties.  First, $\prob{S}{T(S,S')} \equiv 1$ holds since for
every next state \pnt{s'}, there is a ``stuttering transition'' from
\pnt{s} to \pnt{s'} where \pnt{s} = \pnt{s'}.  Second, if a state is
unreachable in \KS in $n$ transitions it is also unreachable in $m$
transitions if $m < n$. Conversely, if a state is reachable in \KS in
$n$ transitions, it is also reachable in $m$ transitions where $m >
n$.

%
%
\subsection{Solving the RD problem and finding a bad state by PQE}
\label{ssec:main_idea}
As we mentioned in the introduction, one can reduce property checking
to solving the RD problem and checking whether a bad state is
reachable in $n$ transitions where $n \leq \di{I,T}$. In this
subsection, we show that one can solve these two problems by PQE.

The RD problem is to compute $\di{I,T}$. It reduces to finding the
smallest $n$ such that the sets of states reachable in $n$ and $(n+1)$
transitions are identical. The latter, as
Proposition~\ref{prop:diam_pqe} below shows, comes down to checking if
formula $I_1$ is redundant in \prob{\Abs{S}{n}}{I_0 \wedge I_1 \wedge
  \Abs{T}{n+1}} i.e. whether $\prob{\Abs{S}{n}}{I_0 \wedge I_1 \wedge
  \Abs{T}{n+1}} \equiv \prob{\Abs{S}{n}}{I_0 \wedge \Abs{T}{n+1}}$. If
$I_1$ is redundant, then $\di{I,T} \leq n$.  This is a special case of
the PQE problem. Instead, of finding a formula $H(S_{n+1})$ such that
$\prob{\Abs{S}{n}}{I_0 \wedge I_1 \wedge \Abs{T}{n+1}} \equiv H \wedge
\prob{\Abs{S}{n}}{I_0 \wedge \Abs{T}{n+1}}$ one just needs to decide
if a tautological formula $H$ (i.e. $H \equiv 1$) is a solution to the
PQE problem above.

Here is an informal explanation of why redundancy of $I_1$ means
$\di{I,T} \leq n$.  The set of states reachable in $n+1$ transitions
is specified by \prob{\Abs{S}{n}}{I_0 \wedge \Abs{T}{n+1}}. Adding
$I_1$ to $I_0 \wedge \Abs{T}{n+1}$ shortcuts the initial time frame
and so \prob{\Abs{S}{n}}{I_0 \wedge I_1 \wedge \Abs{T}{n+1}} specifies
the set of states reachable in $n$ transitions. Redundancy of $I_1$
means that the sets of states reachable in $n+1$ and $n$ transitions
are the same. Proposition~\ref{prop:diam_pqe} states that the
intuition above is correct.
%
%
\begin{proposition}
\label{prop:diam_pqe}
Let \KS be an $(I,T)$-system. Then $\di{I,T} \leq n$ iff formula $I_1$
is redundant in $\pprob{\Abs{S}{n}}{I_0 \wedge I_1 \wedge
  \Abs{T}{n+1}}$ (i.e.  iff $\pprob{\Abs{S}{n}}{I_0 \wedge
  \Abs{T}{n+1}} \equiv \pprob{\Abs{S}{n}}{I_0 \wedge I_1 \wedge
  \Abs{T}{n+1}}$).
\end{proposition}

Proposition~\ref{prop:bad_state} below shows that one can look for
bugs by checking if $I_1$ is redundant in $\prob{\Abs{S}{n}}{I_0
  \wedge I_1 \wedge \Abs{T}{n+1} \wedge \overline{P}}$ i.e. similarly
to solving the RD problem.

%
%

\begin{proposition}
\label{prop:bad_state}
Let \KS be an $(I,T)$-system and $P$ be a property of \ks. No
$\overline{P}$-state is reachable in $(n+1)$-th time frame for the
first time iff $I_1$ is redundant in $\prob{\Abs{S}{n}}{I_0 \wedge I_1
  \wedge \Abs{T}{n+1} \wedge \overline{P}}$.
\end{proposition}

\section{Depth-first Search By PQE}
\label{sec:dfs}
In this section, we demonstrate that Proposition~\ref{prop:diam_pqe}
enables depth-first search when solving the RD-problem. Namely, we
show that one can prove that $\di{I,T} > n$ without generation of all
states reachable in $n$ transitions. In a similar manner, one can show
that Proposition~\ref{prop:bad_state} enables depth-first search when
looking for a bad state.  Hence it facilitates finding deep bugs.

Proposition~\ref{prop:diam_pqe} entails that proving $\di{I,T} > n$
comes down to showing that $I_1$ is not redundant in
\prob{\Abs{S}{n}}{I_0 \wedge I_1 \wedge \Abs{T}{n+1}}. This can be done
by 
\begin{enumerate}[a)]
\item generating a clause $C(S_{n+1})$ implied by $I_0 \wedge I_1
  \wedge \Abs{T}{n+1}$ 
\item finding a trace that satisfies $I_0 \wedge \Abs{T}{n+1} \wedge
  \overline{C}$
\end{enumerate}
Let \tr{s}{0}{n+1} be a trace satisfying $I_0 \wedge \Abs{T}{n+1}
\wedge \overline{C}$. On the one hand, conditions a) and b) guarantee
that $\prob{\Abs{S}{n}}{I_0 \wedge \Abs{T}{n+1}} \neq
\prob{\Abs{S}{n}}{I_0 \wedge I_1 \wedge \Abs{T}{n+1}}$ under
\pnt{s_{n+1}}. So $I_1$ is not redundant in \prob{\Abs{S}{n}}{I_0
  \wedge I_1 \wedge \Abs{T}{n+1}}. On the other hand, condition a)
shows that \pnt{s_{n+1}} is not reachable in $n$ transitions (see
Proposition~\ref{prop:dfs} of the appendix) and condition b) proves
\pnt{s_{n+1}} reachable in $(n+1)$-transitions.

Note that satisfying conditions a) and b) above \ti{does not require
  breadth-first search} i.e. computing the set of all states reachable
in $n$ transitions. In particular, clause $C$ of condition a) can be
found by taking $I_1$ out of the scope of quantifiers in
\prob{\Abs{S}{n}}{I_0 \wedge I_1 \wedge \Abs{T}{n+1}} i.e. by solving
the PQE problem. In the context of PQE-solving, condition b) requires
$C$ not be a ``noise'' clause implied by $I_0 \wedge \Abs{T}{n+1}$
i.e. the part of the formula that remains quantified (see
Section~\ref{sec:pqe_def}). In this paper, we assume that one employs
PQE algorithms that may generate noise clauses.  So, to make sure that
condition b) holds, one must prove $I_0 \wedge \Abs{T}{n+1} \wedge
\overline{C}$ satisfiable.

\section{\PP procedure}
\label{sec:prove_prop}
In this section, we describe procedure \pp.  As we mentioned in
Subsection~\ref{ssec:pc_no_inv}, when proving that a safety property
$P$ holds, \PP solves the two problems below.
\begin{enumerate}
\item Find the value of \di{I,T} (i.e. solve
  the RD problem).
\vspace{3pt}
\item Check that no $\overline{P}$-state is reachable in $n$
  transitions where $n \leq \di{I,T}$.
\end{enumerate}

\PP returns a counterexample if $P$ does not hold, or the value of
\di{I,T} if $P$ holds. To simply find the value of \di{I,T}, one can
call \PP with the trivial property $P$ (i.e. $P \equiv 1$).

A description of how \PP solves the RD problem is given in
Subsection~\ref{ssec:find_diam}. Solving the RD problem is accompanied
in \PP by checking if a bad state is reached. That is the problems
above are solved by \PP \ti{together}. A description of how \PP checks
if a counterexample exists is given in Subsection~\ref{ssec:find_cex}.
The pseudo-code of \PP is described in
Subsections~\ref{ssec:find_bug_pcode} and ~\ref{ssec:two_routines}.

%
%
\subsection{Finding diameter}
\label{ssec:find_diam}
Proposition~\ref{prop:diam_pqe} entails that proving $\di{I,T} \leq n$
comes down to showing that formula $I_1$ is redundant in
\prob{\Abs{S}{n}}{I_0 \wedge I_1 \wedge \Abs{T}{n+1}}. To this end,
\PP builds formulas $H_1,\dots,H_n$. Here $H_1$ is a subset of clauses
of $I_1$ and formula $H_i(S_i)$, $i=2,\dots,n$ is obtained by
resolving clauses of $I_0 \wedge I_1 \wedge \Abs{T}{i}$.  One can view
formulas $H_1,\dots,H_n$ as a result of ``pushing'' clauses of $I_1$
and their descendants (obtained by resolution) to later time frames.
The main property satisfied by these formulas is that
$\prob{\Abs{S}{n-1}}{I_0 \wedge I_1 \wedge \Abs{T}{n}} \equiv H_n
\wedge \prob{\Abs{S}{n-1}}{I_0 \wedge \Abs{H}{n-1} \wedge \Abs{T}{n}}$
where $\Abs{H}{n-1} = H_1 \wedge \dots \wedge H_{n-1}$.

\PP starts with $n=1$ and $H_1 = I_1$. Then it picks a clause $C$ of
$H_1$ and finds formula $H_2(S_2)$ such that $\prob{\Abs{S}{1}}{I_0
  \wedge H^*_1 \wedge C \wedge \Abs{T}{2}} \equiv H_2 \wedge
\prob{\Abs{S}{1}}{I_0 \wedge H^*_1 \wedge \Abs{T}{2}}$ where $H^*_1 =
H_1 \setminus \s{C}$. Formula $H_1$ is replaced with
$H^*_1$. Computing $H_2$ can be viewed as pushing $C$ to the second
time frame.  If $H_2 \equiv 1$, \PP picks another clause of $H_1$ and
computes $H_2$ for this clause. If $H_2 \equiv 1$ for every clause of
$H_1$, then eventually $H_1$ becomes empty (and so $H_1 \equiv
1$). This means that $I_1$ is redundant in \prob{\Abs{S}{1}}{I_0
  \wedge H_1 \wedge \Abs{T}{2}} and $\di{I,T} \leq 2$. If $H_2$ is
not empty, then \PP picks a clause $C$ of $H_2$ and builds formula
$H_3(S_3)$ such that $\prob{\Abs{S}{2}}{I_0 \wedge H_1 \wedge H^*_2
  \wedge C \wedge \Abs{T}{3}} \equiv H_3 \wedge \prob{\Abs{S}{1}}{I_0
  \wedge H_1 \wedge H^*_2 \wedge \Abs{T}{3}}$ where $H^*_2 = H_2
\setminus \s{C}$. Formula $H_2$ is replaced with $H^*_2$. If $H_3
\equiv 1$ for every clause of $H_2$, then $H_2$ becomes empty and \PP
picks a new clause of $H_1$. This goes on until all descendants of
$I_1$ proved redundant.

The procedure above is based on the observation that if $n =
\di{I,T}$, any clause $C(S_{n+1})$ that is a descendant of $I_1$ is
implied by $I_0 \wedge \Abs{T}{n+1}$. So one does not need to add
clauses depending on $S_{n+1}$ to make formula $H_n$ redundant. In
other words, pushing the descendants of $I_1$ to later time frames
inevitably results in making them redundant. The value of \di{I,T} is
given by the largest index $n$ among the time frames where a
descendant of $I_1$ was not redundant yet.

%
%
\subsection{Finding a counterexample}
\label{ssec:find_cex}
Every time \PP replaces a clause of $H_n(S_n)$ with formula
$H_{n+1}(S_{n+1})$, it checks if a $\overline{P}$-state is
reached. This is done as follows. Recall that $H_{n+1}$ satisfies
$\prob{\Abs{S}{n}}{I_0 \wedge I_1 \wedge \Abs{T}{n+1}} \equiv H_{n+1}
\wedge \prob{\Abs{S}{n}}{I_0 \wedge \Abs{H}{n} \wedge \Abs{T}{n+1}}$.
Let $C$ be a clause of $H_{n+1}$. Assume for the sake of simplicity
that \PP employs a noise-free PQE-solver. Then clause $C$ is derived
\ti{only} if it is implied by $I_0 \wedge I_1 \wedge \Abs{T}{n+1}$ but
not $I_0 \wedge \Abs{T}{n+1}$. This means that the very fact that $C$
is derived guarantees that there is \ti{some} state that is reached in
$(n+1)$-th iteration for the first time.  So if a $\overline{P}$-state
\pnt{s} falsifies $C$, there is a chance that \pnt{s} is reachable.
Now, suppose that $H_{n+1}$ is generated by a PQE-solver that may
generate noise clauses but the amount of noise is small. In this case,
generation of clause $C$ above still implies that there is a
significant probability of \pnt{s} being reachable.

A bad state falsifying $C$ is generated by \PP as an assignment
satisfying $\overline{C} \wedge \overline{P} \wedge R_{n+1}$.  Here
$R_{n+1}$ is a formula meant to help to exclude states that are
unreachable in $n+1$ transitions. Originally, $R_{n+1}$ is
empty. Every time a clause implied by $I_0 \wedge \Abs{T}{n+1}$ is
derived, it is added to $R_{n+1}$.  To find out if \pnt{s} is indeed
reachable, one needs to check the satisfiability of formula $I_0
\wedge \Abs{T}{n+1} \wedge \overline{\cof{A}{s}}$ where \cof{A}{s} is
the longest clause falsified by \pnt{s}. An assignment satisfying this
formula is a counterexample. If this formula is unsatisfiable, the
SAT-solver returns a clause $C^*$ implied by $I_0 \wedge \Abs{T}{n+1}$
and falsified by \pnt{s}. This clause is added to $R_{n+1}$ and \PP
looks for a new state satisfying $\overline{C} \wedge \overline{P}
\wedge R_{n+1}$. If another bad state \pnt{s} is found, \PP proceeds
as above.  Otherwise, $C$ does not specify any bad states reachable in
$(n+1)$ transitions. Then \PP picks a new clause of $H_{n+1}$ to check
if it specifies a reachable bad state.

%
%
\subsection{Description of \PP}
\label{ssec:find_bug_pcode}

\setlength{\intextsep}{2pt}
\begin{figure}
\small
\begin{tabbing}
// $\Abs{R}{n} = R_1 \wedge \dots \wedge R_n$ \\
// \\
aaa\=bb\=cc\= dd\= \kill
$\pp(I,T,P)$\{\\
\tb{\scriptsize{1}}\> $T := \mi{MakeStutter}(T)$ \\
\tb{\scriptsize{2}}\> $\mi{Cex} := \mi{Unsat}(I_0 \wedge T_{0,1} \wedge \overline{P})$ \\
\tb{\scriptsize{3}}\> if ($\mi{Cex} \neq \mi{nil}$) return($\mi{Cex},\mi{nil}$) \\
\tb{\scriptsize{4}}\> $H_1 := I_1$ \\
\tb{\scriptsize{5}}\> $n := 1$ \\
\tb{\scriptsize{6}}\> $\mi{Diam} := 1$ \\
$~~~~~~ ---------$ \\
\tb{\scriptsize{7}}\> while ($H_1 \not\equiv 1$) \{ \\
\tb{\scriptsize{8}}\Tt  if ($H_n \equiv 1$) \{ \\
\tb{\scriptsize{9}}\ttt    $n := n-1$ \\
\tb{\scriptsize{10}}\ttt     continue \}\\
\tb{\scriptsize{11}}\Tt   if $(\mi{FirstVisit}(n+1))$ $R_{n+1} := 1$ \\[8pt]
\tb{\scriptsize{12}}\Tt   $C:= \mi{PickClause}(H_n)$ \\
\tb{\scriptsize{13}}\Tt   $H_n := H_n \setminus \s{C}$ \\
\tb{\scriptsize{14}}\Tt  $H_{n+1}:= \mi{PQE}(\prob{\Abs{S}{n}}{I_0 \wedge C \wedge \Abs{H}{n} \wedge \Abs{T}{n+1}})$ \\[8pt] 
\tb{\scriptsize{15}}\Tt  $\mi{RemNoise}(H_{n+1},\Abs{R}{n+1},I,T)$\\
\tb{\scriptsize{16}}\Tt  if ($H_{n+1} \equiv 1$) continue \\
\tb{\scriptsize{17}}\Tt  $\mi{Cex} := \mi{ChkBadSt}(H_{n+1},\Abs{R}{n+1},I,T,P)$\\
\tb{\scriptsize{18}}\Tt  if ($\mi{Cex} \neq \mi{nil}$) return($\mi{Cex},\mi{nil}$) \\
\tb{\scriptsize{19}}\Tt  $n := n+1$  \\
\tb{\scriptsize{20}}\Tt  if ($\mi{Diam} < n$)  $\mi{Diam} := n$ \} \\
$~~~~~~ ---------$ \\
\tb{\scriptsize{21}}\>  return($\mi{nil},\mi{Diam}$) \}\\
\end{tabbing} 
\vspace{-15pt}
\caption{The \PP procedure}
\label{fig:prove_prop}
\end{figure}

Pseudo-code of \PP is given in Figure~\ref{fig:prove_prop}. \PP
accepts formulas $I$, $T$ and $P$ specifying initial states,
transition relation and the property to be verified respectively. \PP
returns a counterexample if $P$ does not hold, or \di{I,T} if $P$
holds. As we mentioned above, to simply compute the value of \di{I,T},
it suffices to call \PP with the trivial property $P$ that is always
true.

\PP consists of three parts separated by the dotted line. The
first part (lines 1-6) starts with modifying the transition relation
to introduce stuttering (see Subsection~\ref{ssec:stuttering}). Then
\PP checks if a bad state is reachable in one transition (lines 2-3).
After that, \PP sets formula $H_1$ to $I_1$ and parameter $n$ to 1
(lines 4-5).  The parameter $n$ stores the index of the latest time
frame where the corresponding formula $H_n$ is not empty. \PP
concludes the first part by setting the value of the diameter to 1
(line 6).

The second part consists of a \ti{while} loop (lines 7-20). In this
loop, \PP pushes formula $I_1$ and its descendants to later time
frames.  This part consists of three pieces separated by vertical
spaces.  The first piece (lines 8-11) starts by checking if formula
$H_n$ has no clauses (and so $H_n \equiv 1$).  If this is the case,
then all descendants of $H_n$ have been proved redundant.  So \PP
decreases the value of $n$ by 1 and starts a new iteration.  If $H_n
\not\equiv 1$, \PP checks if \mbox{$(n+1)$-th} time frame is visited
for the first time. If so, \PP sets formula $R_{n+1}$ to 1.  As we
mentioned in the previous subsection, $R_{n+1}$ is used to accumulate
clauses implied by $I_0 \wedge \Abs{T}{n+1}$.

\PP starts the second piece of the \ti{while} loop (lines 12-14) by
picking a clause $C$ of formula $H_n$ and removing it from
$H_n$. After that, \PP builds formula $H_{n+1}$ such that
$\prob{\Abs{S}{n}}{I_0 \wedge \Abs{H}{n} \wedge C \wedge \Abs{T}{n+1}}
\equiv H_{n+1} \wedge \prob{\Abs{S}{n}}{I_0 \wedge \Abs{H}{n} \wedge
  \Abs{T}{n+1}}$.

In the third piece, (lines 15-20), \PP analyzes formula
$H_{n+1}$. First, it calls procedure \ti{RemNoise} described in the
next subsection. It drops noise clauses of $H_{n+1}$ i.e. ones implied
by $I_0 \wedge \Abs{T}{n+1}$.  If the resulting formula $H_{n+1}$ is
empty, \PP starts a new iteration.  Otherwise, \PP calls procedure
\ti{ChkBadSt} also described in the next subsection.  \ti{ChkBadSt}
checks if clauses of $H_{n+1}$ exclude a bad state reachable in $n+1$
transitions. If not, i.e. if no counterexample is found, \PP
increments the value of $n$ by 1. If the value of $n$ is greater than
the current diameter $\mi{Diam}$, the latter is set to $n$ (line
20). After that a new iteration begins.

The third part of \PP consists of line 21. \PP gets to this line if
$I_1$ is proved redundant and no bad state is reachable in \di{I,T}
transitions. This means that property $P$ holds and \PP returns the
value of \di{I,T}.

%
%
\subsection{Description of RemNoise and ChkBadSt procedures}
\label{ssec:two_routines}
\setlength{\intextsep}{2pt}
\begin{figure}
\small
\begin{tabbing}
aaa\=bb\=cc\= dd\= \kill
$\mi{RemNoise}(H_i,\Abs{R}{i},I,T)$\{\\
\tb{\scriptsize{1}}\> for each clause $C \in H_i$ \{ \\
\tb{\scriptsize{2}}\Tt if $(\mi{Unsat}(I_0 \wedge \Abs{T}{i} \wedge \Abs{R}{i} \wedge \overline{C}))$ \{ \\
\tb{\scriptsize{3}}\ttt  $H_i := H_i \setminus \s{C}$\\
\tb{\scriptsize{4}}\ttt  $R_i := R_i \wedge C$ \}\}     \\
\end{tabbing} 
\vspace{-15pt}
\caption{The $\mi{RemNoise}$ procedure}
\vspace{7pt}
\label{fig:rem_noise}
\end{figure}

%
Pseudo-code of \ti{RemNoise} is given in Figure~\ref{fig:rem_noise}.
The objective of \ti{RemNoise} is to remove noise clauses of $H_i$
i.e.  ones implied by $I_0 \wedge \Abs{T}{i}$. So for every clause $C$
of $H_i$, \ti{RemNoise} checks if formula $I_0 \wedge \Abs{T}{i}
\wedge \Abs{R}{i} \wedge \overline{C}$ is satisfiable. (Here
\Abs{R}{i}= $R_1 \wedge \dots \wedge R_i$. It specifies clauses
implied by $I_0 \wedge \Abs{T}{i}$ that have been generated earlier.)
If the formula above is unsatisfiable, $C$ is removed from $H_i$ and
added to $R_i$.

\setlength{\intextsep}{2pt}
\begin{figure}
\small
\begin{tabbing}
aaa\=bb\=cc\= dd\= \kill
$\mi{ChkBadSt}(H_i,\Abs{R}{i},I,T,P)$\{\\
\tb{\scriptsize{1}}\> for each clause $C \in H_i$ \{\\
\tb{\scriptsize{2}}\Tt while ($\mi{true}$) \{ \\
\tb{\scriptsize{3}}\ttt  $\pnt{s} := \mi{SatAssgn}(\overline{C} \wedge \overline{P} \wedge R_i)$ \\
\tb{\scriptsize{4}}\ttt  if ($\pnt{s} = \mi{nil}$) break\\
\tb{\scriptsize{5}}\ttt  $(\mi{Cex},C^*) := \mi{Unsat}(I_0 \wedge \Abs{T}{i} \wedge \Abs{R}{i} \wedge \overline{\cof{A}{s}})$  \\
\tb{\scriptsize{6}}\ttt  if ($\mi{Cex} \neq \mi{nil}$) return($\mi{Cex}$) \\
\tb{\scriptsize{7}}\ttt  $R_i := R_i \wedge C^*$\}\} \\
\tb{\scriptsize{8}}\> return($\mi{nil}$)\}\\
\end{tabbing} 
\vspace{-15pt}
\caption{The $\mi{ChkBadSt}$ procedure}
\vspace{7pt}
\label{fig:ChkBadSt}
\end{figure}

%
Pseudo-code of \ti{ChkBadSt} is given in Figure~\ref{fig:ChkBadSt}.
It checks if a clause of $H_i$ specifies a bad state reachable in $i$
transitions for the first time. The idea of \ti{ChkBadSt} was
described in Subsection~\ref{ssec:find_cex}. \ti{ChkBadSt} consists of
two nested loops. In the outer loop, \ti{ChkBadSt} enumerates clauses
of $H_i$. In the inner loop, \ti{ChkBadSt} checks if a bad state
\pnt{s} satisfying formula $\overline{C} \wedge \overline{P} \wedge
R_i$ is reachable in $i$ transitions. The inner loop iterates until
this formula becomes unsatisfiable.

Finding out if \pnt{s} is reachable in $i$ transitions comes down to
checking the satisfiability of formula $I_0 \wedge \Abs{T}{i} \wedge
\Abs{R}{i} \wedge \overline{\cof{A}{s}}$.  (Here \cof{A}{s} is the
longest clause falsified by \pnt{s}.) An assignment satisfying this
formula specifies a counterexample. If this formula is unsatisfiable,
a clause $C^*(S_i)$ is returned that is implied by $I_0 \wedge
\Abs{T}{i}$ and falsified by \pnt{s}. This clause is added to $R_i$
and a new iteration of the inner loop begins.

\section{The \FC procedure}
\label{sec:prove_prop*}
\setlength{\intextsep}{2pt}
\begin{figure}
\small
\begin{tabbing}
aaa\=bb\=cc\= dd\= \kill
$\fc(I,T,P)$\{\\
\tb{\scriptsize{1}}\> $T := \mi{MakeStutter}(T)$\\
\tb{\scriptsize{2}}\> $\mi{Cex} := \mi{Unsat}(I_0 \wedge T_{0,1} \wedge \overline{P})$ \\
\tb{\scriptsize{3}}\> if ($\mi{Cex} \neq \mi{nil}$) return($\mi{Cex}$) \\
\tb{\scriptsize{4}}\> $\Sup{I}{exp} := \mi{ExpandInitStates}(I,P)$ \\
$~~~~~~~  ---------$ \\
\tb{\scriptsize{5}}\> while ($\mi{true}$) \{\\
\tb{\scriptsize{6}}\Tt $(\mi{Cex},\mi{Diam}) := \pp(\Sup{I}{exp},T,P)$  \\
\tb{\scriptsize{7}}\Tt if ($\mi{Cex} \neq \mi{nil}$) \{ \\
\tb{\scriptsize{8}}\ttt   $\pnt{s_0} := \mi{ExtractInitState}(\mi{Cex})$ \\
\tb{\scriptsize{9}}\ttt    if ($I(\pnt{s_0}) = 1$) return($\mi{Cex}$) \\
\tb{\scriptsize{10}}\ttt    $\mi{ExcludeState}(\Sup{I}{exp},\pnt{s_0})$ \\
\tb{\scriptsize{11}}\ttt    continue~~\} \\
\tb{\scriptsize{12}}\Tt return($\mi{nil}$)\}\} \\

\end{tabbing} 
\vspace{-15pt}
\caption{The \FC procedure}
\vspace{7pt}
\label{fig:prove_prop*}
\end{figure}

%
When a property holds, the \PP procedure described in
Section~\ref{sec:prove_prop} has to examine traces of length up to the
reachability diameter. This strategy may be inefficient for transition
systems with a large diameter. In this section, we describe a
variation of \PP called \FC that addresses this problem.  In
particular, \FC can prove a property by examining traces that are much
shorter than the diameter. The main idea of \FC is to expand the set
of initial states by adding $P$-states that may not be reachable at
all.  So faster convergence is achieved by expanding the set of
allowed behaviors.  This is similar to boosting the performance of
existing methods of property checking by looking for a weaker
inductive invariant (as opposed to building the strongest inductive
invariant satisfied only by \ti{reachable} states).

The pseudo-code of \FC is given in Figure~\ref{fig:prove_prop*}.  It
consists of two parts separated by the dotted line. \FC starts the
first part (lines 1-4) by introducing stuttering. Then it checks if
there is a bad state reachable in one transition. Finally, it
generates a formula \Sup{I}{exp} specifying an expanded set of initial
states that satisfies \impl{I}{\Sup{I}{exp}} and
\impl{\Sup{I}{exp}}{P} (line 4).  Here $I$ is the initial set of
states and $P$ is the property to be proved. A straightforward way to
generate \Sup{I}{exp} is to simply set it to $P$.

The second part (lines 5-12) consists of a \ti{while} loop. In this
loop, \FC repeatedly calls the \PP procedure described in
Section~\ref{sec:prove_prop} (line 6). It returns
$(\mi{Cex},\mi{Diam})$.  If $\mi{Cex} = \mi{nil}$, property $P$ holds
and \FC returns $\mi{nil}$ (line 12).  Otherwise, \FC analyzes the
counterexample $\mi{Cex}=\tr{s}{0}{n}$ returned by \PP (lines 7-11).
If state \pnt{s_0} of $\mi{Cex}$, satisfies $I$, then $P$ does not
hold and \FC returns $\mi{Cex}$ as a counterexample (line 9).  If
$I(\pnt{s_0})=0$, \FC excludes \pnt{s_0} by conjoining \Sup{I}{exp}
with a clause $C$ such that $C(\pnt{s_0})=0$ and \impl{I}{C}. Then \FC
starts a new iteration. When constructing clause $C$ it makes sense to
analyze $\mi{Cex}$ to find other states of \Sup{I}{exp} to be
excluded. Suppose, for instance, that one can easily prove that state
\pnt{s_1} of $\mi{Cex}$ can be reached from a state \pnt{s^*_0} such
that \Sup{I}{exp}(\pnt{s^*_0})=1, $I(\pnt{s^*_0})=0$ and $\pnt{s^*_0}
\neq \pnt{s_0}$. Then one may try to pick clause $C$ so that it is
falsified by both \pnt{s_0} and \pnt{s^*_0}.

\FC is a complete procedure i.e. it eventually proves $P$ or finds a
counterexample.

\section{Some Background}
\label{sec:background}

The first methods of property checking were based on BDDs and computed
the set of reachable states~\cite{mc_thesis}.  Since BDDs frequently
get prohibitively large, SAT-based methods of property checking have
been introduced. Some of them, like interpolation~\cite{ken03} and
IC3~\cite{ic3} have achieved a great boost in performance.  Among
incomplete SAT-based methods (that can do only bug hunting), Bounded
Model Checking (BMC)~\cite{bmc} has enjoyed a lot of success.

As we mentioned in the introduction, the problem with inductive
invariants is that they can be too large to generate or too hard to
find.  Besides, if a property is false due to a deep bug, looking for
an inductive invariant may not be the best strategy to find this bug.
After the introduction of PQE~\cite{hvc-14}, we formulated a few
approaches addressing the problems above. In particular,
in~\cite{tech_rep_crr}, we described a PQE-based procedure for
property checking meant for finding deep bugs. However, that procedure
was incomplete.  Here, we continue this line of research. Similarly to
the procedure of~\cite{tech_rep_crr}, \PP performs depth-first search
meant to facilitate finding deep bugs. However, in contrast to the
former, \PP is complete.

The idea of proving a property without generating an inductive
invariant is not new. For instance, earlier it was proposed to combine
BMC with finding a \ti{recurrence} diameter~\cite{recurr_diam}. The
latter is equal to the length of the longest trace that does not
repeat a state. Obviously, the recurrence diameter is larger or equal
to the reachability diameter. In particular, the former can be
drastically larger than the latter. In this case, finding the
recurrence diameter is of no use.

\section{Conclusions}
\label{sec:conclusions}
In this paper, we present \pp, a new procedure for checking safety
properties.  It is based on a technique called Partial Quantifier
Elimination (PQE). In contrast to regular quantifier elimination, in
PQE, only a small part of the formula is taken out of the scope of
quantifiers. In~\cite{fmcad12,fmcad13,cert_tech_rep}, we developed the
machinery of redundancy based reasoning meant for building efficient
PQE solvers. The advantage of \PP is twofold.  First, it can prove
that a property holds without generation of an inductive
invariant. This can be very useful when inductive invariants are
prohibitively large or are hard to find. Second, \PP performs
depth-first search and so can be used for finding deep bugs.

\bibliographystyle{IEEEtran}
\bibliography{short_sat,local}
\vspace{15pt}
\appendix
\setcounter{proposition}{0}
\section{Proofs}
Lemma~\ref{lemma:diam} below is used in proving
Proposition~\ref{prop:diam_pqe}.
%
%
\begin{lemma}
\label{lemma:diam}
Let \KS be an $(I,T)$-system. Then $\di{I,T} \leq n$ iff
$\prob{\Abs{S}{n}}{I_0 \wedge \Abs{T}{n+1}} \equiv
\prob{\Abs{S}{n}}{I_1 \wedge \Abs{T}{n+1}}$ where $n \geq 0$.
\end{lemma}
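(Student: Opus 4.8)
The plan is to reduce the claimed equivalence of quantified formulas to an equality of sets of reachable states, and then settle that equality by a short fixpoint argument. Throughout, let $\rch{I,T,k}$ denote (the formula over $S$ specifying) the set of states reachable in $k$ transitions; by the stuttering feature recorded in Subsection~\ref{ssec:stuttering}, these sets form a monotone chain $\rch{I,T,0}\subseteq\rch{I,T,1}\subseteq\cdots$ whose union is the set of all reachable states, and $\rch{I,T,k+1}$ is the image of $\rch{I,T,k}$ under $T$ for every $k\geq 0$.

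First I would identify the two formulas in the statement. An assignment $\pnt{s_{n+1}}$ to $S_{n+1}$ satisfies $\prob{\Abs{S}{n}}{I_0 \wedge \Abs{T}{n+1}}$ exactly when there exist $\pnt{s_0},\dots,\pnt{s_n}$ making $\tr{s}{0}{n+1}$ a valid trace, i.e. exactly when $\pnt{s_{n+1}}$ is reachable in $n+1$ transitions; hence $\prob{\Abs{S}{n}}{I_0 \wedge \Abs{T}{n+1}} \ueqv \rch{I,T,n+1}$. For the second formula, the key observation is that $S_0$ occurs only in the conjunct $T_{0,1}=T(S_0,S_1)$, so pushing $\exists S_0$ innermost and using the stuttering identity $\prob{S_0}{T_{0,1}}\equiv 1$ gives $\prob{\Abs{S}{n}}{I_1 \wedge \Abs{T}{n+1}} \equiv \prob{S_1,\dots,S_n}{I_1 \wedge T_{1,2}\wedge\dots\wedge T_{n,n+1}}$, which after renaming $S_i\mapsto S_{i-1}$ is $\prob{\Abs{S}{n-1}}{I_0\wedge\Abs{T}{n}}\ueqv\rch{I,T,n}$ (for $n=0$ this degenerates to $\prob{S_0}{I_1\wedge T_{0,1}}\equiv I_1$, the set $\rch{I,T,0}$ of initial states). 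Consequently the equivalence in the Lemma is literally the statement $\rch{I,T,n}\equiv\rch{I,T,n+1}$.

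It then remains to prove $\di{I,T}\leq n$ iff $\rch{I,T,n}\equiv\rch{I,T,n+1}$. For the forward direction, $\di{I,T}\leq n$ says $\rch{I,T,n}$ already contains every reachable state, so it coincides with $\rch{I,T,n+1}$ (monotonicity gives one inclusion, and $\rch{I,T,n+1}$ cannot contain anything new). For the converse, from $\rch{I,T,n}=\rch{I,T,n+1}$ I would apply the image operator repeatedly: $\rch{I,T,n+2}$ is the image of $\rch{I,T,n+1}$, which equals the image of $\rch{I,T,n}$, which is $\rch{I,T,n+1}=\rch{I,T,n}$, and by induction $\rch{I,T,k}=\rch{I,T,n}$ for all $k\geq n$. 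Since every reachable state lies in some $\rch{I,T,k}$, and all such sets are contained in $\rch{I,T,n}$ (using monotonicity when $k\leq n$ and the fixpoint equality when $k\geq n$), every reachable state is reachable in at most $n$ transitions, i.e. $\di{I,T}\leq n$.

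I expect the only real subtlety to be this last converse step — the fixpoint closure argument that a single coincidence of consecutive reachability fronts propagates to all larger indices — together with handling the base case $n=0$ and making the informal description of the reachability diameter precise (it constrains reachable states, not arbitrary assignments to $S$). Everything else is routine manipulation of the quantifier prefix and the single stuttering identity $\prob{S}{T(S,S')}\equiv 1$.
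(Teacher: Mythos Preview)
Your proof is correct. The organization differs from the paper's: you first identify the two quantified formulas semantically with $\rch{I,T,n+1}$ and $\rch{I,T,n}$ (the latter via the stuttering identity $\prob{S_0}{T_{0,1}}\equiv 1$ followed by an index shift), and then reduce the lemma to the set-theoretic claim $\di{I,T}\le n \iff \rch{I,T,n}=\rch{I,T,n+1}$, which you dispatch with a monotone-fixpoint argument on the image operator. The paper instead argues both implications directly at the level of explicit traces, without ever naming the reachability sets: for the ``if'' direction it assumes $\di{I,T}>n$, picks a state first reachable at step $n+1$, and shifts a witnessing $I_1$-trace down by one time frame; for the ``only if'' direction it pads short traces with stuttering steps at the front. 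Your decomposition is more modular and makes the role of stuttering sharper (it enters once, to eliminate the dangling $S_0$, and once for monotonicity); the paper's hands-on trace construction is more self-contained but leaves implicit why $\di{I,T}>n$ guarantees a state first reachable at \emph{exactly} step $n+1$ --- a point your fixpoint closure covers explicitly.
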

\begin{proof}[\kern-10pt Proof]
\tb{If part:} Given $\prob{\Abs{S}{n}}{I_0 \wedge \Abs{T}{n+1}} \equiv
\prob{\Abs{S}{n}}{I_1 \wedge \Abs{T}{n+1}}$, let us prove $\di{I,T}\!
\leq\!n$. Assume the contrary, i.e. $\di{I,T}\!>\!n$. Then there is a
state \pnt{a_{n+1}} reachable only in $(n\!+\!1)$-th time frame.
Hence, there is a trace $t_a\!=\!\Tr{a}{0}{n+1}$ satisfying
$I_0\!\wedge\!  \Abs{T}{n+1}$. Then due to
$\prob{\Abs{S}{n}}{I_0\!\wedge
  \Abs{T}{n+1}}\!\equiv\!\prob{\Abs{S}{n}}{I_1\!\wedge\!\Abs{T}{n+1}}$
there exists a trace $t_b\!=\!\Tr{b}{0}{n+1}$ satisfying $I_1\!
\wedge\!\Abs{T}{n+1}$ where \pnt{b_{n+1}}= \pnt{a_{n+1}}.

Let $t_c$ be a trace \tr{c}{0}{n} where $\pnt{c_i} = \pnt{b_{i+1}}$,
$i=0,\dots,n$. The fact that $t_b$ satisfies $I_1 \wedge \Abs{T}{n+1}$
implies that $t_c$ satisfies $I_0 \wedge \Abs{T}{n}$. Since \pnt{c_n}
= \pnt{b_{n+1}} = \pnt{a_{n+1}}, state \pnt{a_{n+1}} is reachable in
$n$ transitions. So we have a contradiction.

\tb{Only if part:} Given $\di{I,T} \leq n$, let us prove that 
$\prob{\Abs{S}{n}}{I_0 \wedge \Abs{T}{n+1}} \equiv
\prob{\Abs{S}{n}}{I_1 \wedge \Abs{T}{n+1}}$.

First, let us show that \prob{\Abs{S}{n}}{I_0 \wedge \Abs{T}{n+1}}
implies \prob{\Abs{S}{n}}{I_1 \wedge \Abs{T}{n+1}}.  Let
\prob{\Abs{S}{n}}{I_0 \wedge \Abs{T}{n+1}}=1 under an assignment
\pnt{s_{n+1}} to $S_{n+1}$. Then the state \pnt{s_{n+1}} is reachable
in $n+1$ transitions. Since $\di{I,T} \leq n$, there has to be a trace
$t_a = \tr{a}{0}{k}$ where $k \leq n$ and \pnt{a_k} =
\pnt{s_{n+1}}. Let $m$ be equal to $n+1-k$.  Let $t_b =
\tr{b}{0}{n+1}$ be a trace defined as follows: \pnt{b_i}=\pnt{a_0},
$i=0,\dots,m$, and \pnt{b_i}=\pnt{a_{i-m}}, $i=m+1,\dots,n+1$. Due to
the stuttering feature of \ks, the trace $t_b$ satisfies $I_1 \wedge
\Abs{T}{n+1}$ and \pnt{b_{n+1}}=\pnt{a_k}=\pnt{s_{n+1}}. So,
\prob{\Abs{S}{n}}{I_1 \wedge \Abs{T}{n+1}}=1 under the assignment
\pnt{s_{n+1}} to $S_{n+1}$.

Now, we show that \prob{\Abs{S}{n}}{I_1 \wedge \Abs{T}{n+1}} implies
$\pprob{\Abs{S}{n}}{I_0 \wedge \Abs{T}{n+1}}$. Let
\prob{\Abs{S}{n}}{I_1 \wedge \Abs{T}{n+1}}=1 under an assignment
\pnt{s_{n+1}} to $S_{n+1}$. Then \pnt{s_{n+1}} is reachable in $n$
transitions. Due to the stuttering feature of \ks, the state
\pnt{s_{n+1}} is also reachable in $n\!+\!1$ transitions. So, a trace
\tr{s}{0}{n+1} satisfies $I_0 \wedge \Abs{T}{n+1}$.  Hence,
\prob{\Abs{S}{n}}{I_0 \wedge \Abs{T}{n+1}}=1 under the assignment
\pnt{s_{n+1}}.
\end{proof}
%
%
\begin{proposition}
Let \KS be an $(I,T)$-system. Then $\di{I,T} \leq n$ iff formula $I_1$
is redundant in $\pprob{\Abs{S}{n}}{I_0 \wedge I_1 \wedge
  \Abs{T}{n+1}}$ (i.e.  iff $\pprob{\Abs{S}{n}}{I_0 \wedge
  \Abs{T}{n+1}} \equiv \pprob{\Abs{S}{n}}{I_0 \wedge I_1 \wedge
  \Abs{T}{n+1}}$).
\end{proposition}
\begin{proof}[\kern-10pt Proof]
Lemma~\ref{lemma:diam} entails that to prove the proposition at hand
it is sufficient to show that $\pprob{\Abs{S}{n}}{I_0 \wedge
  \Abs{T}{n+1}} \equiv \pprob{\Abs{S}{n}}{I_1 \wedge \Abs{T}{n+1}}$
iff formula $I_1$ is redundant in $\pprob{\Abs{S}{n}}{I_0 \wedge I_1
  \wedge \Abs{T}{n+1}}$.

\tb{If part:} Given $I_1$ is redundant in \prob{\Abs{S}{n}}{I_0 \wedge
  I_1 \wedge \Abs{T}{n+1}}, let us show that $\prob{\Abs{S}{n}}{I_0
  \wedge \Abs{T}{n+1}} \equiv \prob{\Abs{S}{n}}{I_1 \wedge
  \Abs{T}{n+1}}$.  Redundancy of $I_1$ means that
$\prob{\Abs{S}{n}}{I_0 \wedge I_1 \wedge \Abs{T}{n+1}} \equiv
\prob{\Abs{S}{n}}{I_0 \wedge \Abs{T}{n+1}}$. Let us show that $I_0$ is
redundant in $\prob{\Abs{S}{n}}{I_0 \wedge I_1 \wedge \Abs{T}{n+1}}$
and hence $\prob{\Abs{S}{n}}{I_1 \wedge \Abs{T}{n+1}} \equiv
\prob{\Abs{S}{n}}{I_0 \wedge \Abs{T}{n+1}}$.  Assume the contrary
i.e. $I_0$ is not redundant and hence $\prob{\Abs{S}{n}}{I_0 \wedge
  I_1 \wedge \Abs{T}{n+1}}$ $\not\equiv$ $\prob{\Abs{S}{n}}{I_1 \wedge
  \Abs{T}{n+1}}$. Then there is an assignment \pnt{s_{n+1}} to
variables of $S_{n+1}$ for which $\prob{\Abs{S}{n}}{I_1 \wedge
  \Abs{T}{n+1}} = 1$ and $\prob{\Abs{S}{n}}{I_0 \wedge I_1 \wedge
  \Abs{T}{n+1}} = 0$. (The opposite is not possible since $I_0 \wedge
I_1 \wedge \Abs{T}{n+1}$ implies $I_1 \wedge \Abs{T}{n+1}$.) This
means that
\begin{itemize}
\item there is a valid trace $t_a$= \tr{a}{0}{n+1} where \pnt{a_1} satisfies $I_1$ and \pnt{a_{n+1}} = \pnt{s_{n+1}}.
\item there is no trace $t_b$=\tr{b}{0}{n+1} where \pnt{b_0} satisfies $I_0$, \pnt{b_1} satisfies $I_1$
and \pnt{b_{n+1}} = \pnt{s_{n+1}}.
\end{itemize}
Let us pick $t_b$ as follows. Let \pnt{b_k}=\pnt{a_k} for \mbox{$1
  \leq k \leq n+1$} and \pnt{b_0}=\pnt{b_1}.  Let us show that $t_b$
satisfies $I_0 \wedge I_1 \wedge \Abs{T}{n+1}$ and so we have a
contradiction. Indeed, \pnt{b_0} satisfies $I_0$ because \pnt{b_1}
satisfies $I_1$ and \pnt{b_0}=\pnt{b_1}. Besides,
(\pnt{b_0},\pnt{b_1}) satisfies $T_{0,1}$ because the system at hand
has the stuttering feature. Hence $t_b$ satisfies $I_0 \wedge I_1
\wedge \Abs{T}{n+1}$.

\tb{Only if part:} Given $\prob{\Abs{S}{n}}{I_0 \wedge \Abs{T}{n+1}}
\equiv \prob{\Abs{S}{n}}{I_1 \wedge \Abs{T}{n+1}}$, let us show that
$I_1$ is redundant in \prob{\Abs{S}{n}}{I_0 \wedge I_1 \wedge
  \Abs{T}{n+1}}. Assume the contrary i.e.  $\prob{\Abs{S}{n}}{I_0
  \wedge \Abs{T}{n+1}} \not\equiv \prob{\Abs{S}{n}}{I_0 \wedge I_1
  \wedge \Abs{T}{n+1}}$.  Then there is an assignment \pnt{s_{n+1}} to
variables of $S_{n+1}$ such that $\prob{\Abs{S}{n}}{I_0 \wedge
  \Abs{T}{n+1}} = 1$ and $\prob{\Abs{S}{n}}{I_0 \wedge I_1 \wedge
  \Abs{T}{n+1}} = 0$. This means that
\begin{itemize}
\item there is a valid trace $t_a$= \tr{a}{0}{n+1} where \pnt{a_0} satisfies $I_0$ and \pnt{a_{n+1}} = \pnt{s_{n+1}}
\item there is no trace $t_b$=\tr{b}{0}{n+1} where \pnt{b_0} satisfies $I_0$, \pnt{b_1} satisfies $I_1$
and \pnt{b_{n+1}} = \pnt{s_{n+1}}.
\end{itemize}
Let us show that then \prob{\Abs{S}{n}}{I_1 \wedge \Abs{T}{n+1}}
evaluates to 0 for \pnt{s_{n+1}}. Indeed, assume the contrary i.e.
there is an assignment $t_c = \tr{c}{0}{n+1}$ satisfying $I_1 \wedge
\Abs{T}{n+1}$ where \pnt{c_1} satisfies $I_1$ and \pnt{c_{n+1}} =
\pnt{s_{n+1}}. Let $t_d=\tr{d}{0}{n+1}$ be obtained from $t_c$ as
follows: \pnt{d_0}=\pnt{d_1}=\pnt{c_1}, \pnt{d_i}=\pnt{c_i},
$i=2,\dots,n+1$.  Then $t_d$ satisfies $I_0 \wedge I_1 \wedge
\Abs{T}{n+1}$ which contradicts the claim above that there is no trace
$t_b$. Hence, \prob{\Abs{S}{n}}{I_0 \wedge \Abs{T}{n+1}}=1 and
\prob{\Abs{S}{n}}{I_1 \wedge \Abs{T}{n}}=0 under assignment
\pnt{s_{n+1}}. So we have a contradiction.
\end{proof}

\vspace{4pt}
%
%
\begin{proposition}
Let \KS be an $(I,T)$-system and $P$ be a property of \ks. No
$\overline{P}$-state is reachable in $(n+1)$-th time frame for the
first time iff $I_1$ is redundant in $\prob{\Abs{S}{n}}{I_0 \wedge I_1
  \wedge \Abs{T}{n+1} \wedge \overline{P}}$.
\end{proposition}
\begin{proof}[\kern-10pt Proof]
\tb{If part:} Assume the contrary i.e. $I_1$ is redundant in
$\prob{\Abs{S}{n}}{I_0 \wedge I_1 \wedge \Abs{T}{n+1} \wedge
  \overline{P}}$ but there is a bad state \pnt{s_{n+1}} that is
reachable in $(n+1)$-th time frame for the first time. Then there is
an assignment $t_a = \tr{a}{0}{n+1}$ satisfying $I_0 \wedge
\Abs{T}{n+1} \wedge \overline{P}$ where \pnt{a_{n+1}} = \pnt{s_{n+1}}.
Redundancy of $I_1$ means that $\prob{\Abs{S}{n}}{I_0 \wedge I_1
  \wedge \Abs{T}{n+1} \wedge \overline{P}} \equiv
\prob{\Abs{S}{n}}{I_0 \wedge \Abs{T}{n+1} \wedge \overline{P}}$.  Then
there is an assignment $t_b = \tr{b}{0}{n+1}$ where \pnt{b_{n+1}} =
\pnt{s_{n+1}} that satisfies $I_0 \wedge I_1 \wedge \Abs{T}{n+1}
\wedge \overline{P}$. Let $t_c = \tr{c}{0}{n}$ where $\pnt{c_i} =
\pnt{b_{i+1}}$, $i=0,\dots,n$. Then $I(\pnt{c_0})=1$ and
$P(\pnt{c_n})=0$ since \pnt{c_n}=\pnt{b_{n+1}}=\pnt{s_{n+1}}. The fact
that $t_c$ is a valid trace entails that the state \pnt{s_{n+1}} is
reachable in $n$-th time frame as well. So we have a contradiction.

\tb{Only if part:} Assume the contrary i.e. no bad state is reachable
in $(n+1)$-th time frame for the first time but $I_1$ is not redundant
in $\prob{\Abs{S}{n}}{I_0 \wedge I_1 \wedge \Abs{T}{n+1} \wedge
  \overline{P}}$.  This means that $\prob{\Abs{S}{n}}{I_0 \wedge I_1
  \wedge \Abs{T}{n+1} \wedge \overline{P}} \not\equiv
\prob{\Abs{S}{n}}{I_0 \wedge \Abs{T}{n+1} \wedge \overline{P}}$.  Then
there is an assignment \pnt{s_{n+1}} to variables of $S_{n+1}$ such
that \prob{\Abs{S}{n}}{I_0 \wedge \Abs{T}{n+1} \wedge \overline{P}}=1
and \prob{\Abs{S}{n}}{I_0 \wedge I_1 \wedge \Abs{T}{n+1} \wedge
  \overline{P}}=0 under \pnt{s_{n+1}}.  The means that there is an
assignment \tr{a}{0}{n+1} satisfying $I_0 \wedge \Abs{T}{n+1} \wedge
\overline{P}$ where \pnt{a_{n+1}} = \pnt{s_{n+1}}.  Hence,
\pnt{s_{n+1}} is a bad state that is reachable in $(n+1)$-th time
frame.

Let us show that \pnt{s_{n+1}} is not reachable in a previous time
frame. Assume the contrary i.e. \pnt{s_{n+1}} is reachable in $k$-th
time frame where $k < n+1$.  Then there is an assignment
\linebreak$t_b = \tr{b}{0}{k}$ satisfying $I_0 \wedge \Abs{T}{k}
\wedge \overline{P}$ where \pnt{b_k}=\pnt{s_{n+1}}. Let $t_c =
\tr{c}{0}{n+1}$ be defined as follows: \pnt{c_0}=\pnt{c_1}=\pnt{b_0},
\mbox{\pnt{c_i}=\pnt{b_{i-1}}}, $i=2,\dots,k+1$, \pnt{c_i} =
\pnt{b_k}, $i = k+2,\dots,n+1$. Informally, $t_c$ specifies the same
sequence of states as $t_b$ plus stuttering in the initial state and
after reaching state \pnt{c_{k+1}} equal to \pnt{b_k} (and so to
\pnt{s_{n+1}}). Then $t_c$ satisfies $I_0 \wedge I_1 \wedge
\Abs{T}{n+1} \wedge \overline{P}$ under assignment \pnt{s_{n+1}} and
we have a contradiction.
\end{proof}
%
%
\begin{proposition}
\label{prop:dfs}
Let \KS be an $(I,T)$-system and $H(S_{n+1})$ be a formula. Then $I_0
\wedge I_1 \wedge \Abs{T}{n+1} \rightarrow H$ \linebreak entails $I_1 \wedge
T_{1,2} \wedge \dots \wedge T_{n,n+1} \rightarrow H$.
\end{proposition}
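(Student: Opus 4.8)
The plan is to prove the entailment directly, by padding any trace that witnesses satisfaction of $I_1 \wedge T_{1,2} \wedge \dots \wedge T_{n,n+1}$ with one extra stuttering step at the front, turning it into a trace that witnesses satisfaction of $I_0 \wedge I_1 \wedge \Abs{T}{n+1}$. This is the single-transition version of the ``prepend stuttering'' construction already used in the proofs of Lemma~\ref{lemma:diam} and Proposition~\ref{prop:bad_state}.

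Concretely, first I would take an arbitrary assignment to $S_1 \cup \dots \cup S_{n+1}$ that satisfies $I_1 \wedge T_{1,2} \wedge \dots \wedge T_{n,n+1}$; it corresponds to a sequence of states $\pnt{s_1}, \dots, \pnt{s_{n+1}}$ (the values of $S_1, \dots, S_{n+1}$) with $I(\pnt{s_1}) = 1$ and $T(\pnt{s_i}, \pnt{s_{i+1}}) = 1$ for $i = 1, \dots, n$. Since $H = H(S_{n+1})$, it suffices to show $H(\pnt{s_{n+1}}) = 1$. Next I would form the extended assignment over $S_0 \cup \dots \cup S_{n+1}$ that keeps $\pnt{s_1}, \dots, \pnt{s_{n+1}}$ on $S_1, \dots, S_{n+1}$ and places $\pnt{s_1}$ on $S_0$ as well, and check that it satisfies $I_0 \wedge I_1 \wedge \Abs{T}{n+1}$: the $I_1$ conjunct holds since $I(\pnt{s_1}) = 1$; the $I_0$ conjunct holds for the same reason, because $I_0 \ueqv I_1 \ueqv I$ and $S_0$ now carries the state $\pnt{s_1}$; the conjuncts $T_{1,2}, \dots, T_{n,n+1}$ of $\Abs{T}{n+1}$ hold by construction; and the only remaining conjunct $T_{0,1}$ holds because under the extended assignment it asserts $T(\pnt{s_1}, \pnt{s_1}) = 1$, which is true by the stuttering feature of \ks.

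Finally I would apply the hypothesis $I_0 \wedge I_1 \wedge \Abs{T}{n+1} \rightarrow H$ to the extended assignment, obtaining $H(\pnt{s_{n+1}}) = 1$, which is exactly what was needed; since the original assignment was arbitrary, $I_1 \wedge T_{1,2} \wedge \dots \wedge T_{n,n+1} \rightarrow H$ follows. I do not expect a real obstacle here, as the construction is short; the two points to get right are (i) that $H$ depends only on $S_{n+1}$, so its value is the same under the original and the extended assignment, and (ii) the explicit use of stuttering, $T(\pnt{s},\pnt{s}) = 1$, to discharge the extra conjunct $T_{0,1}$ --- this is precisely where the standing assumption that \KS has the stuttering feature is used.
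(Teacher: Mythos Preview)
Your proposal is correct and matches the paper's own proof: both prepend a single stuttering step $\pnt{s_0}:=\pnt{s_1}$ to a witness of $I_1 \wedge T_{1,2}\wedge\dots\wedge T_{n,n+1}$, use stuttering to discharge $T_{0,1}$ and $I_0\ueqv I_1$ to discharge $I_0$, and then invoke the hypothesis on the extended trace. The only cosmetic difference is that the paper phrases it as a proof by contradiction while you argue directly.
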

\begin{proof}
Assume that $I_1 \wedge T_{1,2} \wedge \dots \wedge T_{n,n+1}
\rightarrow H$ does not hold. Then there is a trace
$t_a$=\tr{a}{1}{n+1} that satisfies\linebreak$I_1 \wedge T_{1,2}
\wedge \dots \wedge T_{n,n+1}$ but falsifies $H$. The latter means
that \pnt{a_{n+1}} falsifies $H$. Let trace $t_b = \tr{b}{0}{n+1}$ be
obtained from $t_a$ as follows: $\pnt{b_0} = \pnt{b_1}$,
\pnt{b_i}=\pnt{a_i}, $i=1,\dots,n+1$.  Since $I(\pnt{a_1})=1$, then
$I(\pnt{b_0})=I(\pnt{b_1})=1$.  Due to the stuttering feature,
$T(\pnt{b_0},\pnt{b_1})=1$. So trace $t_b$ satisfies $I_0 \wedge I_1
\wedge \Abs{T}{n+1}$. Since \mbox{$\pnt{b_{n+1}} = \pnt{a_{n+1}}$},
then $H(\pnt{b_{n+1}}) = 0$ and \mbox{$I_0 \wedge I_1 \wedge
  \Abs{T}{n} \not\rightarrow H$}. So we have a contradiction.
\end{proof}

\end{document}